\newcommand{\BigO}{\mathcal{O}}
\newcommand{\maj}{MAJ}
\DeclarePairedDelimiter{\ceil}{\lceil}{\rceil}
\DeclarePairedDelimiter{\floor}{\lfloor}{\rfloor}
\newtheorem{claim}{Claim}
\title{Computing majority with low-fan-in majority queries\footnote{This work was partially supported by the Russian Academic Excellence Project ’5-100’.}}
\author[1]{Gleb I. Posobin}
\affil[1]{National Research University, Higher School of Economics, Moscow, Russia\\
  \texttt{posobin@gmail.com}}
\authorrunning{G.\,I. Posobin} 
\subjclass{F.1.3 Complexity Measures and Classes}
\keywords{majority, upper bound, threshold, circuit complexity}
\begin{document}

\maketitle

\begin{abstract}
In this paper we examine the problem of computing majority function $\maj_n$ on $n$ bits by depth-two formula, where each gate is a majority function on at most $k$ inputs.
We present such formula that gives the first nontrivial upper bound for this problem, with $k = \frac{2}{3} n + 4$.
This answers an open question in \cite{kulikov2017computing}.

We also look at this problem in adaptive setting --- when we are allowed to query for value of $\maj_k$ on any subset, and wish to minimize the number of such queries.
We give a simple lower bound for this setting with $\ceil{n/k}$ queries, and we present two algorithms for this model: the first one makes $\approx 2\frac{n}{k} \log k$ queries in the case when we are limited to the standard majority functions, and the second one makes $\frac{n}{k} \log k$ queries when we are allowed to change the threshold of majority function.

 \end{abstract}

    \section{Introduction}
    We study the problem of computing majority function $\maj_n$ if we are only allowed to query values of $\maj_k$ (that is, majority on $k$ variables) functions.
    
    Majority functions and circuits consisting of them arise in various areas of computational complexity (see e.g. \cite{minsky, Gold01, DBLP:books/daglib/0028687}).
    Particularly, iterated (or recursive) majority that consists of iterated application of majority of small number of inputs to itself, provides an example of a function with interesting complexity properties in various models (\cite{DBLP:journals/cc/JuknaRSW99, DBLP:journals/rsa/MagniezNSSTX16, DBLP:journals/rsa/MosselO03, DBLP:journals/siamcomp/KampZ07}), and helps in various constructions.

    Another motivation for this problem comes from the studies of boolean circuits of constant depth.
    $\widehat{\textsf{TC}}{}^{0}$ is defined as the class of functions computable by constant-depth polynomial-sized circuits consisting of majority gates, and it plays one of the major roles in this area (\cite{DBLP:books/daglib/0028687}).
    
    The first model we look at is a standard boolean circuit of depth two, but where all the gates are threshold functions of at most $k$ variables.
    It was shown in \cite{DBLP:journals/eccc/EngelsGMR17} that to be able to compute such circuits $k$ must be $\Omega(n^{4/5})$.
    There is a trivial upper bound $k = n$: just take all the gates on the first level to match inputs (that is, gate $f_i = x_i = \maj_{\{i\}}(x)$), and the output gate is a standard majority $\maj_n(f_1, \ldots, f_n)$.
    \cite{kulikov2017computing} did not provide any nontrivial upper bound on $k$ and posed an open question whether such bound exists.
    We answer this question positively and present the depth-two circuit with $\maj_{\frac{2}{3}n+4}$ gates that computes $\maj_n$ function.
    This result was independently obtained by Bruno Bauwens by computer search, as told to us in personal communication.
    For the sake of completeness, we also give a simple proof of lower bound $k \geq \frac{1}{\sqrt[3]{2}}n^{2/3}$.
    
    The second model is an adaptive one --- algorithm can query the $\maj_k$ function on any subset of size at most $k$, and we are interested in the worst case number of queries made to compute $\maj_n$.
    This problem was motivated by the proof in \cite{kulikov2017computing}, which does not use any properties of the function in the output gate except for its monotonicity, and also by \cite{DBLP:conf/latin/EppsteinH16}, where they study adaptive computation of $\maj_n$ with discrepancy queries on $k$ elements.
    We give a simple proof for the lower bound of $\ceil{n/k}$ queries, and we also present two algorithms for adaptive setting: one that runs in fixed-threshold setting and which requires $2 (\frac{n}{k-4}+1)(\log k+4) \approx 2 \frac{n}{k} \log k$ queries, and another that runs in adjustable-threshold adaptive model (that is, when the algorithm can specify majority's threshold) and requires $\ceil{n/k} \ceil{\log(k+1)}$ queries.
    
    \section{Definitions}
    By $\log$ we denote binary logarithm and
    $n$ always means the length of input vector $x = (x_1, \ldots, x_n) \in \{0, 1\}^n$.
    
    We are going to study majority functions $\maj_n(x_1, \ldots, x_n) = [x_1 + \ldots + x_n \geq n/2]$, where $[P] = 1$ if condition $P$ is true, and $[P] = 0$ otherwise.
    We will also call $\maj_k$ standard majority functions.
    By $sum_S(x)$ we denote $\sum_{i\in S} x_i$.
    We define $\maj_S(x)$ for $S \subset \{1, \ldots, n\}$ as $[sum_S(x) \geq |S|/2]$. 
    We also define $\maj_S(x; t) = [sum_S(x) \geq t]$.
    
    Threshold functions are a generalization of majority functions: $f(x_1, \ldots, x_n) = [\alpha_1 x_1 + \ldots + \alpha_n x_n \geq t]$.
    
    The first (``static``) model that we look at is $\maj_k \circ \maj_k$ --- it is defined as depth-two formula, consisting of threshold gates $[\sum \alpha_i x_i \geq t]$ such that $\sum \alpha_i \leq k$ and $\alpha_i$ are positive integers.
    
    In the second (``adaptive``) model an algorithm is given access to an oracle that knows vector $x$.
    There are two variations of this model that we look at.
    
    In the adaptive model with adjustable threshold the oracle receives a pair $(S, t)$, where $S$ is a set $S \subset \{1, \ldots, n\}$ such that $|S| \leq k$ and $t$ is an integer, and responds with the value of $\maj_S(x; t)$.
    
    In the adaptive model with fixed threshold the oracle receives a set $S \subset \{1, \ldots, n\}$ such that $|S| \leq k$ and responds with the value of $\maj_S(x)$.
    
    \section{Static model}
    \subsection{Lower bound}
    \begin{lemma}
    For large enough $n$, $k$ should be at least $\frac{1}{\sqrt[3]{2}}n^{2/3}$ in order to compute $\maj_n$ in $\maj_k \circ \maj_k$ model.
    \end{lemma}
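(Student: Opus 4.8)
The plan is to argue by a union bound over the bottom gates, exploiting the monotonicity of the whole formula together with the extreme sensitivity of $\maj_n$ at its central slice. Write the formula as a single top threshold gate fed by bottom gates $g_1,\dots,g_m$. Since the top gate has positive integer weights summing to at most $k$, there are at most $m\le k$ bottom gates; and since each $g_j$ likewise has positive integer weights summing to at most $k$, it depends on a set $V_j$ of at most $k$ variables. Crucially, every gate (top and bottom) is monotone. I will take $n=2s+1$ odd, so that $\maj_n(x)=1$ exactly when the Hamming weight of $x$ is at least $s+1$.

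First I would set up the right probability space: a uniformly random maximal chain in the Boolean cube, i.e.\ a uniformly random order $\pi$ in which the coordinates are switched on one at a time from $0^n$ to $1^n$. Along such a chain $\maj_n$ is $0$ up to level $s$ and $1$ from level $s+1$ on, so it flips exactly at the step from weight $s$ to weight $s+1$. Because the formula is monotone and computes $\maj_n$, at this central step the top weighted sum must strictly increase, which forces at least one bottom gate to flip from $0$ to $1$ at precisely that step. Writing $L_j$ for the (random) level at which $g_j$ turns on, this says $\Pr[\exists j:\ L_j=s+1]=1$, and a union bound gives $\sum_{j=1}^m \Pr[L_j=s+1]\ge 1$.

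The heart of the argument is then a per-gate anti-concentration estimate: I claim $\Pr[L_j=s+1]=\BigO(\sqrt{k}/n)$ for every $j$. The key observation is that, conditioned on the relative order in which the variables of $V_j$ appear in $\pi$, the weights of $g_j$ determine a single variable of $V_j$ — the one whose arrival pushes the weighted sum across the threshold $t_j$ — so that $L_j$ is exactly the position in $\pi$ of that one distinguished element of $V_j$. Hence $\Pr[L_j=s+1]$ is at most the largest point probability, over all ranks $r$, that the $r$-th of $v_j=|V_j|$ marked elements among $n$ lands in the central position $s+1$; this equals $\max_r \binom{s}{r-1}\binom{s}{v_j-r}/\binom{n}{v_j}$. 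Bounding this hypergeometric point mass by its maximal atom $\BigO(1/\sqrt{v_j})$ times the marking density $v_j/n$ yields $\Pr[L_j=s+1]=\BigO(\sqrt{v_j}/n)=\BigO(\sqrt{k}/n)$, using $v_j\le k$.

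Combining these two bounds gives $1\le m\cdot \BigO(\sqrt{k}/n)\le \BigO(k^{3/2}/n)$, hence $k^{3/2}=\Omega(n)$ and $k=\Omega(n^{2/3})$; tracking the constant in the central hypergeometric estimate (and absorbing lower-order terms into ``large enough $n$'') is what pins it down to $\tfrac{1}{\sqrt[3]{2}}n^{2/3}$. I expect the main obstacle to be exactly this anti-concentration step: one must handle arbitrary positive integer weights (a gate with a few heavy variables has a wide ``critical window,'' which naively inflates the flip probability), and the clean way around it is the order-statistics reformulation above, which replaces a weighted Littlewood--Offord estimate by the elementary fact that the position of a fixed order statistic of $v_j$ marked points is spread out on the scale $n/\sqrt{v_j}$ near the centre. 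The remaining steps — reducing to one top gate, the monotonicity/union-bound setup, and the bookkeeping of constants — are routine.
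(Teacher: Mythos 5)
Your proposal is correct, and while it shares the same combinatorial skeleton as the paper's proof, the key per-gate estimate is obtained by a genuinely different route. Structurally the two arguments coincide: a uniformly random maximal chain passes through a uniformly random edge of the central slice, so your union bound $\sum_j \Pr[L_j = s+1] \ge 1$ is exactly the paper's double count (``every central edge on which $\maj_n$ flips must be a sensitive edge of some bottom gate''), and your bound $\Pr[L_j = s+1] = \BigO(\sqrt{k}/n)$ plays the role of the paper's bound on the number of central sensitive edges per gate. The difference is in how that bound is proved. The paper invokes the extremal fact that majority maximizes influence among monotone functions, so a gate on $l \le k$ variables has at most $l\binom{l}{l/2} \le \sqrt{l}\,2^l$ sensitive edges in $H^l$, and multiplies by the number of completions of a point of $H^l$ to the central slice of $H^n$, bounded by the central binomial coefficient $2^{n-l}/(n-l)^{1/2}$. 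You instead condition on the relative order of the gate's variables along the chain, note that monotonicity forces the flip to occur at a rank $r$ determined by that order, and bound the point mass of the $r$-th order statistic of a random $v_j$-subset at the central position by $\BigO(1/\sqrt{v_j}) \cdot v_j/n$. This buys you three things: the argument is self-contained (no appeal to the influence-extremality of majority, which the paper cites from the literature); it applies verbatim to \emph{arbitrary} monotone bottom gates, so your worry about heavy weights evaporates --- any monotone gate flips at a unique order-determined rank, with the weights entering only through which rank that is; and with the sharp hypergeometric atom $\sqrt{2/(\pi v_j)}$ it yields $k^3 \gtrsim \frac{\pi}{2}n^2$, slightly better than the paper's $k^3 \ge n^2/2$, so the constant $1/\sqrt[3]{2}$ is comfortably within reach. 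Two loose ends to tidy in a full write-up: justify the maximal-atom bound $\BigO(1/\sqrt{v_j})$ for the hypergeometric point masses (log-concavity plus the central term, or Stirling), which requires $v_j - 1 \le s$ --- harmless, since you may assume $k < n^{2/3}$ toward a contradiction; and handle even $n$, where the flip occurs from weight $n/2 - 1$ to $n/2$ and the same order-statistic estimate goes through with $\binom{s}{r-1}\binom{s+1}{v_j - r}/\binom{n}{v_j}$ in place of your expression, since the lemma as stated covers all sufficiently large $n$, not only odd ones.
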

    
    This is not the best currently known lower bound, for a more involved proof of $k \geq n^{13/19 + o(1)}$ see \cite{kulikov2017computing}.
    This proof is simpler and is adapted from the proof of a stronger claim in \cite{kulikov2017computing} that shows that for some function in $\maj_k \circ \maj_k$ to differ from $\maj_n$ on small fraction (less than $1/10$) of inputs $x$ with $\sum_i x_i = \floor{(n-1)/2}$, $k$ must be $\Omega(n^{2/3})$.
    We provide this proof just for the sake of completeness.
    
    \begin{proof}
    Let's look at the boolean hypercube $H^n = \{0, 1\}^n$.
    It is known (\cite{o2014analysis}) that among all monotone boolean functions majority is the one with the largest influence --- that is, its value changes on the largest possible number of edges of the hypercube.
    The number of such edges for $\maj_k$ is $k\binom{k}{k/2} \leq k^{1/2}2^k$.
    So if some gate $f_i$ depends on $l \leq k$ variables, then it changes its value on at most $l^{1/2} 2^l$ edges of $H^l$.
    
    Obviously, on each edge $(u, v)$ in $H^n$ such that $\sum_i u_i < n/2, \sum_i v_i \geq n/2$, there must be at least one function $f_i$ in the first level of the formula, such that $f_i(u) \not = f_i(v)$.
    There are $n \binom{n}{n/2} > \frac{n^{1/2}2^n}{\sqrt{2}}$ such edges.
    
    Let $f_i$ be some gate in the first level of the formula, without loss of generality assume that $f_i$ depends only on $x_1, \ldots, x_l$ for $l \leq k$.
    Then, for each $u \in H^l$ there are at most $2^{n-l}/(n-l)^{1/2}$ possible continuations $u_{l+1}, \ldots, u_n$ of $u$ that have $\sum_{1 \leq i \leq n} u_i = \floor{(n-1)/2}$.
    This is so, since $u_{l+1}, \ldots, u_n$ must sum to $\floor{(n-1)/2} - \sum_{1 \leq i \leq l} u_i$, and the number of such assignments of $u$ is a binomial coefficient.
    The largest binomial coefficient $\binom{n-l}{z}$ for fixed $n-l$ is the central one: $z = \ceil{(n-l)/2}$, for which there is a well-known bound $\binom{n-l}{z} \leq 2^{n-l}/(n-l)^{1/2}$.
    
    This means that for a fixed $f_i$ there are at most $l \frac{2^l}{l^{1/2}} \frac{2^{n-l}}{(n-l)^{1/2}} \leq \sqrt{2} k^{1/2} \frac{2^n}{n^{1/2}}$ such edges $(u, v)$ that $\sum_i u_i < n/2, \sum_i v_i \geq n/2$ and $f_i(u) \not = f_i(v)$ (note that $(n-l) \geq \frac{1}{2}n$ for large enough $n$, if $l \leq k = \BigO(n^{2/3})$).
    Summing this for all $f_i$ we get:
    \[\sqrt{2} k^{3/2} \frac{2^n}{n^{1/2}} \geq \frac{n^{1/2}2^n}{\sqrt{2}} \implies k^3 \geq \frac{n^2}{2}\]
    \end{proof}
    
    \subsection{Upper bound}
    \begin{theorem}
    For any $n$ $\maj_n$ is computable by some circuit in $\maj_{\frac{2}{3}n+4} \circ \maj_{\frac{2}{3}n+4}$.
    \end{theorem}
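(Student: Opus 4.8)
The plan is to break the symmetry between the inputs rather than to split them evenly. Assume for the moment that $6 \mid n$. I would partition the coordinates into a large block $A$ with $|A| = \frac{2}{3}n$ and a small block $B$ with $|B| = \frac{1}{3}n$, and write $a = sum_A(x)$ and $b = sum_B(x)$, so that $\maj_n(x) = [a + b \geq n/2]$. The small block $B$ I would feed into the output gate bit by bit: each $x_i$ with $i \in B$ is read by a trivial first-level gate $[x_i \geq 1] = x_i$ of weight $1$, and these enter the output gate with weight $1$, contributing exactly $b$ and costing $|B| = \frac{1}{3}n$ of the output gate's budget.

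The block $A$ I would summarize by a thermometer code, but -- and this is the crucial point -- only over the window of thresholds that $B$ can actually force. Since $b$ ranges over $\{0, \ldots, \frac{1}{3}n\}$, the condition $a \geq n/2 - b$ only ever compares $a$ against a value in $\{n/6, \ldots, n/2\}$, so it suffices to install on the first level the $\frac{1}{3}n + 1$ gates $[sum_A(x) \geq \tau]$ for $\tau \in \{n/6, n/6+1, \ldots, n/2\}$. Each such gate has input-weight $|A| = \frac{2}{3}n \leq k$, and each enters the output gate with weight $1$; their sum is the clamped value $\phi(a) = \min(\max(a - n/6 + 1,\, 0),\, \frac{1}{3}n + 1)$, which equals $a - n/6 + 1$ exactly on the window $n/6 \leq a \leq n/2$. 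The output gate is then the single threshold gate $[\phi(a) + b \geq \frac{1}{3}n + 1]$, of total weight $(\frac{1}{3}n + 1) + \frac{1}{3}n = \frac{2}{3}n + 1 \leq k$.

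It remains to verify correctness in the three regimes of $a$, which is where I expect the only genuine care to be needed. On the window $n/6 \leq a \leq n/2$ the gate reads $[a - n/6 + 1 + b \geq \frac{1}{3}n + 1] = [a + b \geq n/2]$, as wanted. For $a \leq n/6 - 1$ the thermometer saturates to $0$, so the gate outputs $[b \geq \frac{1}{3}n + 1] = 0$; this is correct because then $a + b \leq (n/6 - 1) + n/3 = n/2 - 1 < n/2$. For $a \geq n/2$ the thermometer saturates to $\frac{1}{3}n + 1$, the gate outputs $1$, and indeed $a + b \geq n/2$. The engine of the argument is precisely that the width $\frac{1}{3}n$ of $B$'s range matches the width of the threshold window on $A$, so the clamped part of the thermometer never meets an input whose answer is in doubt; this matching is also what pins the split at $|A| : |B| = 2 : 1$, since it equalizes the first-level weight $|A|$ with the output weight $\approx 2|B|$, both at $\frac{2}{3}n$.

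Finally, for arbitrary $n$ I would reduce to the case $6 \mid n$ by padding $x$ with a balanced constant gadget: adding one fixed $1$ and one fixed $0$ turns the test $[s \geq n/2]$ into $[s+1 \geq (n+2)/2]$ and hence preserves $\maj_n$, while constant-$1$ inputs are absorbed as threshold shifts and constant-$0$ inputs are simply dropped, so the padding costs nothing in fan-in. The main obstacle is thus not the core construction but the bookkeeping of rounding $n/6$, $n/2$ and $|A|$ to integers; it is exactly this rounding, over at most a few pad coordinates, that inflates the clean bound $\frac{2}{3}n + 1$ to the stated $\frac{2}{3}n + 4$.
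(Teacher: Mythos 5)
Your core construction is, in substance, exactly the paper's (its Case 1 for $3 \mid n$): trivial weight-1 gates on one third of the inputs, a thermometer of $\frac{1}{3}n+1$ threshold gates $[sum_A(x) \geq \tau]$ with $\tau$ ranging over $\{\ceil{n/6}, \ldots, \ceil{n/2}\}$ on the remaining two thirds, and a near-majority output gate; your verification over the three regimes of $a$ is correct and matches the paper's telescoping computation, including the output-gate weight $\frac{2}{3}n+1$. The only genuine divergence is how general $n$ is handled: the paper does an explicit case analysis mod 3 carrying ceilings through, while you reduce to $6 \mid n$ by padding --- and that reduction has a real gap as stated. Padding with balanced pairs (one constant $1$, one constant $0$) preserves the parity of the instance length, so from odd $n$ the padded length $n+2t$ is odd for every $t$ and can never be a multiple of $6$. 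The reduction therefore fails for every odd $n$, i.e., for half of all inputs the theorem covers.

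The repair is small but must be stated: for odd $n$ one may pad a single constant $0$, since $\maj_{n+1}(x,0) = [s \geq (n+1)/2] = [s \geq n/2] = \maj_n(x)$ for integer $s$ and odd $n$; padding one $0$ if needed and then pairs reaches $6 \mid n'$ with $p = n' - n \leq 5$. A second caveat: your claim that the padding ``costs nothing in fan-in'' is an overstatement, because the thermometer gates read all of $A$ with $|A| = \frac{2}{3}(n+p)$, and the output gate counts all $\frac{1}{3}(n+p)+1$ thermometer gates regardless of where the constants sit, so the worst gate pays up to about $\frac{2}{3}p$ extra fan-in. The placement of the pads then matters: with all pads absorbed into $A$ the output fan-in becomes $\frac{2}{3}(n+p)+1$, which at $p=5$ is $\frac{2}{3}n + \frac{13}{3} > \frac{2}{3}n+4$, violating the stated bound; placing the pads in $B$ (or splitting them between the blocks) keeps every fan-in at most $\frac{2}{3}(n+5) < \frac{2}{3}n+4$, so the theorem survives --- but this accounting, like the parity fix, is exactly the bookkeeping your write-up waves at without doing, and it is the part where the paper's three-case analysis does honest work.
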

    
    \begin{proof}
    
    \textbf{Case 1.} $n$ is divisible by $3$,
    $k = \frac{2}{3}n+1$.\\
    We need to define $\frac{2}{3} n + 1$ first-level threshold functions $f_i$ in the formula.
    Let $f_1 = x_1, \ldots, f_{\frac{1}{3}n} = x_{\frac{1}{3}n}$, each equivalently being a majority function on one corresponding element.
    Denote $\frac{2}{3} n$ indices of yet unused inputs by $M = \{\frac{1}{3} n + 1, \ldots, n\}$.
    Then let $f_{\frac{1}{3}n + i} = \maj_M(x; \ceil{\frac{1}{6}n} + i - 1)$.
    The output gate is a standard majority on all $f_i$.
    \[
    \left[\sum f_i \geq \frac{n}{3} + \frac{1}{2}\right] = \left[\Big(\sum_{\mathclap{i\leq n/3}}x_i + \sum_{\mathclap{0 \leq i \leq n/3}} [sum_M(x) \geq \ceil{n/6} + i]\Big) \geq \frac{n}{3} + \frac{1}{2}\right]
    \]
    If $sum_M(x) \geq \ceil{\frac{1}{6}n} + \frac{1}{3}n = \ceil{\frac{1}{2}n}$, then $\sum_{0 \leq i \leq \frac{1}{3}n} [sum_M(x) \geq \ceil{\frac{1}{6}n} + i] = \frac{1}{3}n + 1$, that means that both output gate and $\maj_n(x)$ will be equal to one.
    
    If $\ceil{\frac{1}{6}n} \leq sum_M(x) \leq \ceil{\frac{1}{2}n}$, then $\sum_{i\leq \frac{1}{3}n}x_i + \sum_{0 \leq i \leq \frac{1}{3}n} [sum_M \geq \ceil{\frac{1}{6}n} + i] = \sum_{i\leq \frac{1}{3}n}x_i + sum_M(x) - \ceil{\frac{1}{6}n} + 1 = \sum_i x_i - \ceil{\frac{1}{6}n} + 1$.
    Then the circuit attains value $1$ iff $\sum_i x_i - \ceil{\frac{1}{6}n} + 1 \geq \frac{1}{3}n + \frac{1}{2}$. Rearranging, we get that the value of the circuit is $1$ iff $\sum_i x_i \geq \frac{1}{3}n + \ceil{\frac{1}{6}n} - \frac{1}{2} = \ceil{\frac{1}{2}n} - \frac{1}{2}$, which is equivalent to $[\sum_i x_i \geq \frac{1}{2}n] = \maj_n(x)$.
    
    If $sum_M(x) < \ceil{\frac{1}{6}n}$, $\maj_n(x) = 0$ and the formula's value is also $0$, since $\sum_{i\leq \frac{1}{3}n}x_i \leq \frac{1}{3}n$.
    
    \textbf{Case 2.} $n = 3m + 1$ for some integer $m$,
    $k = 2m+3 \leq \frac{2}{3}n + 3$.\\
    Let $f_i = x_i$ for $1 \leq i \leq m+1$.
    Let $M = \{m+2, \ldots, 3m+1\}$.
    Then let $f_{m+1+i} = \maj_M(x; \ceil{\frac{1}{2}(m-1)} + i - 1)$ for $1 \leq i \leq m+2$, so there are $2m + 3$ gates in the first level of the circuit.
    The output gate is again a majority on all $f_i$.
    
    If $sum_M(x) \geq \ceil{\frac{1}{2}(m-1)} + m + 1 = \ceil{\frac{3}{2}m+\frac{1}{2}}$, then both $\maj_n(x)$ and the output gate are equal to one.
    
    If $\ceil{\frac{1}{2}(m-1)} \leq sum_M(x) \leq \ceil{\frac{1}{2}(m-1)} + m + 1$, then $\sum_i x_i = \sum_{i\leq m+1}x_i + sum_M(x) - \ceil{\frac{1}{2}(m-1)} + 1 = \sum_i x_i - \ceil{\frac{1}{2}(m-1)} + 1$.
    Then the circuit attains value $1$ iff $\sum_i x_i - \ceil{\frac{1}{2}(m-1)} + 1 \geq m + \frac{3}{2}$.
    Rearranging, we get that the value of the circuit is $1$ iff $\sum_i x_i \geq m + \ceil{\frac{1}{2}(m-1)} + \frac{1}{2} = \ceil{\frac{3}{2}m + \frac{1}{2}} - \frac{1}{2}$.
    Rounding the last part to the smallest greater integer we get that the circuit attains value $1$ iff $\sum_i x_i \geq \ceil{\frac{3}{2}m + \frac{1}{2}}$, exactly when $\maj_n(x) = 1$.
    
    If $sum_M(x) < \ceil{\frac{1}{2}(m-1)}$, then $\sum_i x_i < m + 1 + \ceil{\frac{1}{2}(m-1)} = \ceil{\frac{3}{2}m+1/2}$, so both $\maj_n(x)$ and the circuit will have value $0$.
    
    \textbf{Case 3.} $n = 3m + 2$ for some integer $m$,
    $k = 2m+5 \leq \frac{2}{3}n + 4$.\\
    Let $f_i = x_i$ for $1 \leq i \leq m+2$.
    Let $M = \{m+3, \ldots, 3m+2\}$.
    Then let $f_{m+2+i} = \maj_M(x; \ceil{\frac{1}{2}m-1} + i - 1)$ for $1 \leq i \leq m+3$, so there are $2m + 5$ gates in the first level of the circuit.
    The output gate is again a majority on all $f_i$.
    
    If $sum_M(x) \geq \ceil{\frac{1}{2}m-1} + m + 2 = \ceil{\frac{3}{2}m}+1 = \ceil{\frac{3}{2}m+1}$, then both $\maj_n(x)$ and the output gate are equal to one.
    
    If $\ceil{\frac{1}{2}m-1} \leq sum_M(x) \leq \ceil{\frac{1}{2}m-1} + m + 2 = \ceil{\frac{3}{2}m+1}$, then $\sum_i x_i = \sum_{i\leq m+2}x_i + sum_M(x) - \ceil{\frac{1}{2}m-1} + 1 = \sum_i x_i - \ceil{\frac{1}{2}m} + 2$.
    Then the circuit attains value $1$ iff $\sum_i x_i - \ceil{\frac{1}{2}m} + 2 \geq m + \frac{5}{2}$.
    Rearranging, we get that the value of the circuit is $1$ iff $\sum_i x_i \geq m + \ceil{\frac{1}{2}m} + \frac{1}{2} = \ceil{\frac{3}{2}m + 1} - \frac{1}{2}$.
    Rounding the last part to the smallest greater integer we get that the circuit attains value $1$ iff $\sum_i x_i \geq \ceil{\frac{3}{2}m + 1}$, exactly when $\maj_n(x) = 1$.
    
    If $sum_M(x) < \ceil{\frac{1}{2}m-1}$, then $\sum_i x_i < m + 2 + \ceil{\frac{1}{2}m-1} = \ceil{\frac{3}{2}m+1}$, so both $\maj_n(x)$ and the circuit will have value $0$.

    \end{proof}
    
    \section{Adaptive model}
    \subsection{Lower bound}
    \begin{lemma}
    Any algorithm for computing $\maj_n$ in adaptive model requires $\ceil{n/k}$ queries to oracle.
    \end{lemma}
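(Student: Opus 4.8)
The plan is to prove this by an adversary argument, treating the more powerful adjustable-threshold oracle (a lower bound there immediately gives the same bound in the fixed-threshold model). First I would record the counting fact that makes the bound tight: if an algorithm makes $q$ queries with $q < \ceil{n/k}$, then $q \le \ceil{n/k}-1 < n/k$, so the queried sets $S_1,\ldots,S_q$ satisfy $|S_1 \cup \cdots \cup S_q| \le qk < n$. Hence along any computation path at least one coordinate is never inspected, and the whole difficulty is to exploit this one untouched coordinate despite adaptivity.

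Next I would set up an adversary that builds the hidden input online while answering queries. The invariant I would maintain is that after processing the queries whose touched coordinates form the set $Q$, the adversary has committed $0/1$ values exactly to the coordinates of $Q$, the reported answers equal the true values on those committed coordinates (so consistency is automatic), and the committed partial sum equals $\floor{|Q|/2}$. To preserve the invariant when a new query $(S,t)$ arrives, the adversary assigns the freshly touched coordinates $S \setminus Q$ so that the new partial sum becomes $\floor{|Q \cup S|/2}$; the number of new ones this requires always lies in $[\,0,\,|S\setminus Q|\,]$, so it is realizable, and the forced answer $\maj_S(x;t)$ is then simply read off the committed values.

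Then I would close the argument at the leaf. Writing $m = |Q| \le n-1$ for the final number of touched coordinates and $s = \floor{m/2}$ for the committed sum, I would verify that $m - \floor{n/2} \le s \le \ceil{n/2}-1$, which holds precisely because $m \le n-1$. Consequently the $n-m \ge 1$ untouched coordinates can be filled in to make the total sum equal to $\ceil{n/2}-1$ (giving $\maj_n = 0$) or to $\ceil{n/2}$ (giving $\maj_n = 1$); both completions agree on all of $Q$, hence both are consistent with every answer the algorithm received. The algorithm therefore reaches the same leaf and outputs the same bit on two inputs of differing majority, a contradiction, so at least $\ceil{n/k}$ queries are required.

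The main obstacle is exactly the adaptivity: neither the identity of the surviving free coordinate nor the query sets are known in advance, so one cannot simply fix a convenient input together with a coordinate to flip. The invariant $s = \floor{|Q|/2}$ is what resolves this — it is maintainable online with no knowledge of how many queries will eventually be asked, and it keeps the committed sum inside the window $[\,m-\floor{n/2},\ \ceil{n/2}-1\,]$ that guarantees the untouched coordinates can swing $\maj_n$ either way.
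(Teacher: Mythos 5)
Your proof is correct and follows essentially the same adversary argument as the paper: maintain the invariant that the committed coordinates sum to $\floor{|Q|/2}$, observe that fewer than $\ceil{n/k}$ queries leave at least one coordinate untouched, and exhibit two consistent completions with different majority. The only (harmless) differences are that you state the argument for the stronger adjustable-threshold oracle and fill the untouched coordinates to exact target sums $\ceil{n/2}-1$ and $\ceil{n/2}$, where the paper simply sets them all to $0$ or all to $1$.
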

    \begin{proof}
    We will set bits of vector $x$ the algorithm has not yet asked about right before answering a query from the algorithm.
    Initialize a set $S = \varnothing$ --- this will be the set of all the indices that appeared in algorithm's requests.
    When algorithm makes a request for $\maj_{T}(x)$, we firstly set $x_i$ for $i$ in $T \backslash S$: set any $z = \floor{|T \cup S| / 2} - sum_S(x)$ variables $x_i$ to $1$ and the rest $|T \backslash S| - z$ variables to $0$.
    Then we update $S := S \cup T$.
    After that we answer the query with $\maj_T(x)$.
    
    Clearly, $sum_S(x) = \floor{|S|/2}$ at each step.
    If algorithm has made less than $\ceil{n/k}$ requests, then $|S| < n$, so there are some variables
    that the algorithm knows nothing about.
    
    There are $n - |S|$ such variables.
    If we set $x_i$ with $i \not \in S$ to $1$, then $\sum_i x_i = \floor{|S|/2} + n - |S| = n - \ceil{|S|/2} \geq n - \ceil{(n-1)/2} \geq n - n/2 = n/2$, and $\maj_n(x)$ will be equal to $1$.
    If we set $x_i$ with $i \not \in S$ to $0$, then $\sum_i x_i = \floor{|S|/2} \leq \floor{(n-1)/2} \leq (n-1)/2 < n/2$, and $\maj_n(x)$ will be equal to $0$.
    
    So, whichever answer the algorithm chooses, there will be two vectors $x$ and $y$ consistent with all our answers, such that $\maj_n(x) \not = \maj_n(y)$, so on either $x$ or $y$ the algorithm will err.
    \end{proof}
    
    \subsection{Upper bound in adjustable-threshold setting}
    \begin{lemma}
    There exists an algorithm for adjustable-threshold adaptive model that determines $\maj_n(x)$ in $\ceil{\frac{n}{k}}\ceil{\log(k+1)}$ queries to the oracle.
    \end{lemma}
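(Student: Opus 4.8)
The plan is to reduce the whole problem to exactly computing the number of ones inside each of $\ceil{n/k}$ disjoint blocks, where each block sum is found by a binary search over the threshold parameter that the adjustable model lets us choose.

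First I would partition the index set $\{1, \ldots, n\}$ into $m = \ceil{n/k}$ blocks $S_1, \ldots, S_m$, each of size $|S_j| \leq k$ (for instance, consecutive blocks of size $k$ with a possibly smaller final block). Since the blocks are disjoint and cover every index, we have $\sum_i x_i = \sum_{j=1}^{m} sum_{S_j}(x)$, so it suffices to learn each partial sum $sum_{S_j}(x)$ and add them.

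Next, fix a single block $S$ with $|S| \leq k$. The key observation is that the oracle answer $\maj_S(x; t) = [sum_S(x) \geq t]$ is, as a function of the threshold $t$, monotonically nonincreasing: it equals $1$ for every $t \leq sum_S(x)$ and $0$ for every $t > sum_S(x)$. The unknown value $sum_S(x)$ ranges over the $|S|+1 \leq k+1$ integers $\{0, 1, \ldots, |S|\}$, and each query is exactly a comparison of this unknown against a chosen $t$. I would therefore run a binary search on $t$: starting from the candidate interval $\{0, \ldots, |S|\}$, each query halves the number of remaining candidates, so after $\ceil{\log(|S|+1)} \leq \ceil{\log(k+1)}$ queries the exact value $sum_S(x)$ is determined.

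Finally, spending at most $\ceil{\log(k+1)}$ queries on each of the $m = \ceil{n/k}$ blocks, the algorithm learns $\sum_i x_i$ exactly using at most $\ceil{n/k}\ceil{\log(k+1)}$ queries in total, and it outputs $\left[\sum_i x_i \geq n/2\right] = \maj_n(x)$. The only point that needs care is the bookkeeping of the binary search, namely verifying that $\ceil{\log(k+1)}$ threshold queries really suffice to distinguish among the $k+1$ possible block sums; this is the standard worst-case bound for binary search, and it is precisely where the $\log$ factor in the query complexity originates.
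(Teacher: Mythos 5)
Your proposal is correct and is essentially identical to the paper's own proof: partition $\{1,\ldots,n\}$ into $\ceil{n/k}$ blocks, determine each block sum $sum_{B_i}(x)$ by binary search over the threshold $t$ (using monotonicity of $\maj_{B_i}(x;t)$ in $t$, with $\maj_{B_i}(x;0)=1$ and $\maj_{B_i}(x;|B_i|+1)=0$) in $\ceil{\log(|B_i|+1)} \leq \ceil{\log(k+1)}$ queries, then output $[\sum_i x_i \geq n/2]$. No gaps; the bookkeeping point you flag is exactly the standard binary-search bound the paper also invokes.
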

    \begin{proof}
    Split $\{1, \ldots, n\}$ into $\ceil{\frac{n}{k}}$ disjoint blocks $B_i$ of size at most $k$ each.
    We know that $\maj_{B_i}(x; 0) = 1, \maj_{B_i}(x; |B_i|+1) = 0$, so using binary search we can find such value $h_i$ that $\maj_{B_i}(x; h_i) = 1, \maj_{B_i}(x; h_i+1) = 0$.
    This means that $sum_{B_i}(x) = h_i$.
    Since binary search works in $\ceil{\log(|B_i|+1)}$ steps, in $\ceil{\frac{n}{k}}\ceil{\log(|B_i|+1)}$ queries we will know $\sum_i x_i = \sum_i sum_{B_i}(x) = \sum_i h_i$, and so we will easily find $\maj_n(x) = [\sum_i h_i \geq \frac{n}{2}]$.
    \end{proof}
    
    The upper bound for the adaptive model with adjustable threshold turns out to be at most two times better than the one we get in the case of the adaptive model with fixed threshold.
    
    \subsection{Upper bound in fixed-threshold setting}
    \begin{lemma} \label{binsearch}
    There exists an algorithm such that it accepts two disjoint sets $A, B$, $|A| = |B| \leq k$, $\maj_A(x) \not = \maj_B(x)$, and if given access to the oracle from the definition of fixed-threshold adaptive model it finds in at most $\ceil{\log(|A|+1)}$ queries to the oracle a set $S \subset A \cup B$ such that $sum_S(x) = |S|/2$ and $|S| \geq |A|-1$.
    \end{lemma}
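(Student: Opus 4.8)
The plan is to reduce the task to locating a single threshold crossing along a one-parameter family of subsets, and then to read off the balanced set from that crossing. Assume without loss of generality that $\maj_A(x) = 1$ and $\maj_B(x) = 0$, and write $m = |A| = |B|$. Fix arbitrary orderings $A = \{a_1, \dots, a_m\}$ and $B = \{b_1, \dots, b_m\}$ and define, for $0 \le j \le m$,
\[
S_j = \{a_{j+1}, \dots, a_m\} \cup \{b_1, \dots, b_j\},
\]
so that $S_0 = A$, $S_m = B$, each $S_j$ has size $m \le k$ (hence is a legal query), and passing from $S_j$ to $S_{j+1}$ removes $a_{j+1}$ and inserts $b_{j+1}$. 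Consequently $sum_{S_{j+1}}(x) - sum_{S_j}(x) = x_{b_{j+1}} - x_{a_{j+1}} \in \{-1, 0, 1\}$, i.e.\ consecutive sums move by at most one.

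The map $j \mapsto sum_{S_j}(x)$ need not be monotone, which is the main obstacle to a naive binary search; I would sidestep it by searching for \emph{a} crossing rather than for a specific value. Maintain an interval $[lo, hi]$ under the invariant $\maj_{S_{lo}}(x) = 1$ and $\maj_{S_{hi}}(x) = 0$, initialized to $lo = 0$, $hi = m$ (valid by hypothesis, so the two endpoints cost no queries). Querying the midpoint $\maj_{S_{mid}}(x)$ tells us which of $[lo, mid]$ or $[mid, hi]$ still straddles the threshold, and we recurse on that half; after at most $\ceil{\log(|A|+1)}$ queries we reach $hi = lo + 1$, i.e.\ adjacent indices $j = lo$ and $j+1$ with $\maj_{S_j}(x) = 1$ and $\maj_{S_{j+1}}(x) = 0$. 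Correctness of this search uses only the invariant and not global monotonicity, so the bounded-step observation above is exactly what makes the located crossing informative.

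It remains to extract a balanced set. Since $\maj_{S_j}(x) = 1$ and $\maj_{S_{j+1}}(x) = 0$ give $sum_{S_j}(x) \ge \ceil{m/2}$ and $sum_{S_{j+1}}(x) \le \ceil{m/2} - 1$, while the two sums differ by at most one, both are pinned down: $sum_{S_j}(x) = \ceil{m/2}$ and $sum_{S_{j+1}}(x) = \ceil{m/2} - 1$. If $m$ is even this already means $sum_{S_j}(x) = m/2 = |S_j|/2$, so $S = S_j$ works with $|S| = m = |A|$. If $m$ is odd, the drop of exactly one forces $x_{a_{j+1}} = 1$; removing $a_{j+1}$ from $S_j$ yields a set of size $m-1$ and sum $\ceil{m/2} - 1 = (m-1)/2$, hence balanced, with $|S| = |A| - 1$. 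In both cases $sum_S(x) = |S|/2$ and $|S| \ge |A| - 1$, and the only queries spent are the $\ceil{\log(|A|+1)}$ of the search. The one step I expect to need care is the parity bookkeeping together with the confirmation that the step size is always in $\{-1,0,1\}$, both of which are routine; the conceptual content is entirely in observing that an invariant-based bisection finds a crossing even for a non-monotone sequence.
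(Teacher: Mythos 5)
Your proof is correct and is essentially the paper's own argument in different notation: your interpolating family $S_j$ is exactly the paper's sliding window $f(h) = \maj_c(v(h),\ldots,v(h+c-1))$ over the concatenation of $A$ and $B$ (with $j = h-1$), your invariant-based bisection for a crossing is the paper's binary search for $f(h) \not= f(h+1)$, and the extracted sets coincide (your even-$m$ set is the paper's shifted window, and your odd-$m$ set $S_j \setminus \{a_{j+1}\}$ is the paper's $\{i(h+1),\ldots,i(h+c-1)\}$). The only cosmetic difference is that you pin both sums via the step-size-at-most-one observation where the paper argues via $v(h) \not= v(h+c)$ and $v(h) = f(h)$, and you fix an orientation WLOG where the paper handles both with an $f(h)$-dependent shift.
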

    
    \begin{proof}
    We will define such procedure \texttt{find\_balanced\_set(A, B, x)}.
    
    Let $A_1, \ldots, A_{|A|}$ be some enumeration of elements from $A$ and $B_1, \ldots, B_{|B|}$ be some enumeration of elements from $B$.
    
    Define $c=|A|$, $i(m) = A_m$ if $m \leq c$, otherwise $i(m) = B_{m-c}$, and $v(m) = x_{i(m)}$.
    Also define $f(m) = \maj_c(v(m), \ldots, v(m+c-1))$.
    From the specification of \texttt{find\_balanced\_set(A, B, x)} we know that $f(1) \not = f(c+1)$, so we can use binary search to find position $h$ such that $f(1) = f(h) \not = f(h+1) = f(c+1)$.
    This implies that $v(h) \not = v(h+c)$, and, moreover, $v(h) = f(h)$.
    
    If $c$ is even, this implies $sum_{\{i(h+1), \ldots, i(h+c-1)\}} = \frac{c}{2}-1$, so either $sum_{\{i(h), \ldots, i(h+c-1)\}} = \frac{c}{2}$, if $f(h) = 1$, or $sum_{\{i(h+1), \ldots, i(h+c)\}} = \frac{c}{2}$, if $f(h) = 0$.
    The procedure \texttt{find\_balanced\_set(A, B, x)} then returns the set $\{i(h+1-f(h)), \ldots, i(h+c-f(h))\}$.
    
    If $c$ is odd, $sum_{\{i(h+1), \ldots, i(h+c-1)\}} = \frac{c-1}{2}$, so the procedure returns the set $\{i(h+1), \ldots, i(h+c-1)\}$.
    
    Clearly, this procedure makes at most $\ceil{\log(c+1)}$ queries.
    \end{proof}
    
    \begin{claim} \label{balanced_claim}
    Suppose that for some set $S \subset \{1, \ldots, n\}$ we know that $sum_S(x) = |S|/2$.
    Then, $\maj_n(x) = \maj_{\{1, \ldots, n\} \backslash S}(x)$, so we can forget about indices from $S$ and not take them into consideration, as though vector $x$ has size $n - |S|$.
    \end{claim}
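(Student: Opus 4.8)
The plan is to prove the equivalence by a single direct calculation, splitting the total sum $\sum_{i=1}^n x_i$ across the partition of $\{1,\ldots,n\}$ into $S$ and its complement. Write $T = \{1,\ldots,n\} \backslash S$, so that $|T| = n - |S|$ and $\sum_{i=1}^n x_i = sum_S(x) + sum_T(x)$. The goal $\maj_n(x) = \maj_T(x)$ will follow by substituting the hypothesis $sum_S(x) = |S|/2$ into the threshold that defines $\maj_n$.

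First I would record the minor observation that the hypothesis $sum_S(x) = |S|/2$ forces $|S|$ to be even, since $sum_S(x)$ is a nonnegative integer; this is not logically essential, but it keeps the arithmetic with $|S|/2$ clean. The key step is then to expand the definition $\maj_n(x) = [\sum_{i=1}^n x_i \geq n/2]$, replace $sum_S(x)$ by $|S|/2$ to obtain $[sum_T(x) + |S|/2 \geq n/2]$, and rearrange to $[sum_T(x) \geq (n-|S|)/2]$. Since $(n-|S|)/2 = |T|/2$, the right-hand side is exactly $\maj_T(x)$ by definition, which gives the claimed equality. The final sentence of the statement (``we can forget about indices from $S$'') is then just the operational reading of this identity: the value of $\maj_n$ depends on the coordinates outside $S$ precisely through $\maj_T$.

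The main obstacle: there really is none --- the statement reduces to one rearrangement of an inequality once the sum is split. The only point requiring care is that the threshold shift lines up with no rounding slack, i.e.\ that $n/2 - |S|/2 = (n - |S|)/2 = |T|/2$ holds exactly. This is immediate because the subtraction is performed on the real-valued thresholds before the Iverson bracket is applied, so no floor or ceiling is introduced; I would simply note this explicitly to forestall any worry about parity of $n$ or $|T|$.
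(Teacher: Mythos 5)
Your proof is correct: splitting $\sum_i x_i = sum_S(x) + sum_T(x)$, substituting $sum_S(x) = |S|/2$, and rearranging the real-valued threshold to $[sum_T(x) \geq |T|/2]$ is exactly the intended justification, and your side remarks (that $|S|$ must be even, and that no rounding slack appears since the arithmetic happens before the Iverson bracket) are both accurate. The paper itself states this claim without proof, treating it as immediate, so your calculation supplies precisely the argument the paper implicitly relies on.
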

    
    The basic idea of our adaptive algorithm is simple: on each iteration we find some large set $S$ with $sum_S(x) = |S|/2$, and remove the variables with indices in $S$ by claim \ref{balanced_claim}.
    We find such set $S$ by splitting $\{1, \ldots, n\}$ into $n/k$ subsets of similar sizes, querying each of them, finding two sets of sizes $\approx |S|$ with different $\maj$ values on them, and running the \texttt{find\_balanced\_set} procedure from lemma \ref{binsearch}.
    This is the main idea, but we have to be careful with the sizes of subsets, since $k$ may not divide $n$ evenly, and if we are too careless, after several iterations we may have to split the set of current indices again and make queries for each of new subsets, which could significatnly increase the total number of queries.
    
    \begin{theorem}
    There exists an algorithm for computing $\maj_n$ function in the fixed-threshold adaptive setting using at most $2 (\frac{n}{k-4}+1)(\log k+4)$ queries.
    \end{theorem}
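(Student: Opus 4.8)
The plan is to repeatedly locate a large balanced set $S$ (one with $sum_S(x) = |S|/2$), delete it using Claim \ref{balanced_claim}, and recurse on the surviving indices, whose majority still equals $\maj_n(x)$; once the surviving set has size at most $k$ we finish with one direct query. The only tool for producing balanced sets is \texttt{find\_balanced\_set} from Lemma \ref{binsearch}, which needs two disjoint equal-size sets $A,B$ with $\maj_A(x)\neq\maj_B(x)$ and returns a balanced set of size $|A|$ or $|A|-1$ in $\lceil\log(|A|+1)\rceil$ queries. So the real work is to always have such a pair available cheaply.

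To that end I would fix an even block size $c\le k$ (concretely $c=k-4$, reserving a constant slack below $k$, for the reason explained below), partition the indices into $\approx n/c$ blocks of size $c$, and spend one query per block to learn each block's majority. Maintain the blocks in two pools according to their majority value. While both pools are nonempty, pick one block from each, run \texttt{find\_balanced\_set} on them, and delete the returned balanced set $S$; here the even choice of $c$ is essential, because then $|S|=c$ and the leftover $(A\cup B)\setminus S$ again has size exactly $c$, so it is a legitimate block that can be folded back in with a single re-query rather than forcing a global re-partition. Recycling the leftover this way is precisely what avoids the blow-up warned about in the discussion preceding the theorem: a careless algorithm that re-splits the whole active set whenever it makes progress pays $\Theta((n/c)^2)$ in the worst case when the pools are very unbalanced. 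When one pool becomes empty, every surviving block has the same majority $m$ and all have the same even size $c$, so $\sum_i x_i$ lies on the correct side of $n/2$ and the algorithm outputs $m$; this is the key point that lets us stop without querying the survivors.

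For the accounting, note that deleting each balanced set removes $c$ elements from the active set, so there are at most $\approx n/c$ invocations of \texttt{find\_balanced\_set} in total; each costs $\lceil\log(c+1)\rceil$ queries plus one re-query of its leftover, and the initial round of block queries contributes another $\approx n/c$. Collecting the ceilings and the $\BigO(1)$ per-removal overhead into the factor $(\log k+4)$, and bounding the number of blocks by $\frac{n}{k-4}+1$, keeps the total comfortably within the stated $2 (\frac{n}{k-4}+1)(\log k+4)$, the constant $2$ being a safe over-estimate.

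The main obstacle is the bookkeeping forced by $c\nmid n$: a single short remainder block breaks the uniform-size invariant on which both the ``all survivors share a majority $\Rightarrow$ answer is determined'' step and the safe use of \texttt{find\_balanced\_set} depend, the latter because the procedure forbids truncating a block (the majority of a sub-block is unknown and a direct truncation could silently equalize the two majorities). I would handle it by keeping one flexible block and resolving the remainder with a constant number of auxiliary queries, for instance querying a size-$r$ sub-block to give it a known majority before pairing it; this is exactly what the reserved slack of $4$ below $k$ and the $+1$ in the block count pay for, and the resulting case analysis is routine arithmetic that I would omit.
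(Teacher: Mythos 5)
Your skeleton is essentially the paper's: partition into blocks, spend one query per block, repeatedly pair two blocks with opposite majorities, extract a balanced set via \texttt{find\_balanced\_set} (Lemma \ref{binsearch}), discard it by Claim \ref{balanced_claim}, re-query the leftover as a fresh block, and stop when all survivors agree. Your size management differs: you fix one even block size $c = k-4$ plus a single remainder block, whereas the paper uses $\ceil{n/k}$ blocks of two adjacent sizes $\ell$ and $\ell-1$ and maintains that two-size invariant throughout. The uniform part of your scheme is clean and the accounting is fine: with $c$ even the returned balanced set has size exactly $c$, the leftover has size exactly $c$, each round deletes $c$ indices at cost $\ceil{\log(c+1)}+1$, and the total fits the stated bound (indeed more tightly than the paper, which loses a factor near $2$ because $\ell$ can be as small as $k/2$). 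Two minor remarks: the stopping rule does not actually need equal block sizes --- if all blocks of a partition of the active set share majority $m$, then the active set's majority is $m$ regardless of sizes, as the paper itself uses; and $k-4$ may be odd, a triviality absorbed by the slack.

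The genuine gap is exactly the step you declare routine and omit: the remainder block $R$ of size $r$ with $c \nmid n$. Since \texttt{find\_balanced\_set} requires $|A| = |B|$, $R$ can never be paired with a full block, so your loop deadlocks precisely in the state where every surviving full block has majority $m$ and $R$ has majority $1-m$, and this state must be escaped within your claimed constant overhead. Your sketched escape --- query a size-$r$ sub-block $B'$ of a full block $B$ and pair it with $R$ --- fails in the very case you flag: if $\maj_{B'}(x) = \maj_R(x)$, the query is wasted, nothing is deleted, and you are left with $B'$ (size $r$), $B \setminus B'$ (size $c-r$, with derivable majority $m$) and $R$, i.e.\ three incompatible sizes and still no pairable opposite-majority pair of equal sizes; in the worst case every truncation you try shares $R$'s majority, and you give no bounded recovery. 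Moreover, when $r$ is odd, even a successful pairing returns a balanced set of size $r-1$ and a leftover of size $c+1$, so the non-uniformity persists and, iterated, threatens to push block sizes past $k$ --- this is exactly the danger the paper's Case 2 handles by the non-obvious device of enlarging the returned set by the two endpoints $i(h), i(h+c)$ (valid since $v(h) \neq v(h+c)$), and its Case 1 handles via the peel-one-element query with the key observation that if $\maj_{S_i}(x)$ flips upon removing $y$, a balanced set is already in hand with no binary search needed. Some mechanism of this kind, maintaining an invariant of sizes differing by at most one, is the actual content of the theorem's proof; asserting ``a constant number of auxiliary queries'' plus omitted case analysis does not substitute for it.
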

    
    \begin{proof}
    Split the set $\{1, \ldots, n\}$ into $\ceil{\frac{n}{k}}$ disjoint sets $S_i$ of sizes $\ell = \ceil{\frac{n}{\ceil{n/k}}}$ or $\ell - 1$ each.
    Enumerate $S_i$ in such way that $S_1, \ldots, S_a$ are of size $\ell$ each, and the rest are of size $\ell - 1$.
    $\ell \leq k$, since $\frac{n}{\ceil{n/k}} \leq \frac{n}{n/k} = k$.
    For all $S_i$ query $\maj_{S_i}(x)$.
    If all the answers are the same and equal to $y \in \{0, 1\}$, then $\maj_n(x) = y$ and we are done.
    
    \textbf{Case 1.} \label{case1} Suppose that there are sets $S_i$ of both sizes $\ell$ and $\ell - 1$.
    Then, since not all the answers are the same, there are two indices $i$ and $j$ such that $|S_i| = \ell$, $|S_j| = \ell - 1$ and $\maj_{S_i}(x) \not = \maj_{S_j}(x)$ (otherwise, all $S_i$ with size $\ell$ have $\maj_{S_i}(x) = \maj_{S_j}(x)$ for all $S_j$ of size $\ell - 1$, so all the subsets $S_i, S_j$ have the same answer, a contradiction).
    
    Now we will find a large (with size $\approx \ell$) subset $S \subset \{1, \ldots, n\}$ of indices to exclude from our consideration (by claim \ref{balanced_claim}) in at most logarithmic in $\ell$ number of queries.
    
    Choose any element $y$ from $S_i$ and query $\maj_{S_i\backslash\{y\}}(x)$.
    Suppose that $\maj_{S_i}(x) \not = \maj_{S_i\backslash \{y\}}(x)$.
    Clearly, $x_y$ then is equal to $\maj_{S_i}(x)$.
    
    If $\maj_{S_i}(x)$ is $0$, then $sum_{S_i\backslash \{y\}}(x) \geq \frac{\ell - 1}{2}$ and $sum_{S_i\backslash \{y\}}(x) < \frac{\ell}{2}$, implying $\ell$ is odd and $sum_{S_i\backslash \{y\}}(x) = \frac{\ell - 1}{2}$.
    This means that we can ignore indices from $S_i\backslash \{y\}$ by claim \ref{balanced_claim}.
    
    If $\maj_{S_i}(x)$ is $1$, then $sum_{S_i\backslash \{y\}}(x) < \frac{\ell - 1}{2}$ and $sum_{S_i\backslash \{y\}}(x) + 1 \geq \frac{\ell}{2}$, implying $\ell$ is even and $sum_{S_i}(x) = \frac{\ell}{2}$.
    Again, this means that we can remove indices from $S_i$ from our consideration by claim \ref{balanced_claim}.
    
    So, if there are sets $S_i$ with different sizes, we are left with the case when $\maj_{S_i}(x) = \maj_{S_i\backslash \{y\}}(x)$.
    
    Let $M = S_i\backslash \{y\}$.
    We know that $|M| = |S_j| = \ell - 1$ and that $\maj_M(x) \not = \maj_{S_j}(x)$.
    
    We run \texttt{find\_balanced\_set} from lemma \ref{binsearch} on sets $M$ and $S_j$, and get the set $S$ such that $sum_S(x) = |S|/2$. Note that $S$ will be of size $\ell-2$ (in case when $\ell$ is even) or $\ell-1$.
    Since total number of elements in $M \cup S_j$ is $2\ell - 2$, $|M \cup S_j \backslash S| \in \{\ell-1, \ell\}$.
    
    We then forget about indices in $S$ by claim \ref{balanced_claim}, query $\maj_{M \cup S_j \backslash S}(x)$ and repeat the \hyperref[case1]{\textbf{case 1}} until all $S_i$ are of the same size or have the same answer.
    
    After several iterations we either get sets $S_i$ with the same answer $\maj_{S_i}(x)$ each and we are done, or we get the sets with the same sizes $|S_i|$.
    
    \textbf{Case 2.} All of the sets $S_i$ have same sizes.\\
    Take two sets $S_i$ and $S_j$ with different answers $\maj_{S_i} \not = \maj_{S_j}$ and run the procedure \texttt{find\_balanced\_set(S\_i, S\_j, x)}.
    Let $S$ be the answer.
    If $|S_i|$ is even, $|S| = |S_i|$, so just exclude $S$ from further considerations, query the oracle $\maj_{S_i \cup S_j \backslash S}(x)$, and repeat the procedure until all the sets left have the same $\maj_{S_i}(x)$.
    
    The case of odd $|S_i|$ requires more careful consideration, since if we simply leave elements from $S$ (which has size $|S| = |S_i|-1$) out, $|S_i \cup S_j \backslash S|$ will be $|S_i|+1$, and that may become greater than $k$ after several iterations.
    If we look at the case of odd $c$ in \texttt{find\_balanced\_set}, we will see that instead of the set $\{i(h+1), \ldots, i(h+c-1)\}$ of size $c-1$ it can as well return the set $\{i(h), \ldots, i(h+c)\}$ of size $c+1$, since $v(h) \not = v(h+c)$.
    
    So if $|S_i \cup S_j \backslash S| > k$, then we add elements $i(h), i(h+c)$ to $S$: $S := S \cup \{i(h), i(h+c)\}$.
    We can remove indices from $S$ from future considerations by claim \ref{balanced_claim}, and we query the oracle the value of $\maj_{S_i \cup S_j \backslash S}(x)$.
    Now we again have sets $S_i$ of two different sizes, so we return to \hyperref[case1]{\textbf{case 1}}.
    
    The analysis of number of queries this algorithm makes is rather straightforward:
    it starts with making $\ceil{\frac{n}{\ell-1}}$ queries, then on each iteration it discards at least $\ell-2$ indices after making at most $\ceil{log(\ell+1)}+2$ queries.
    So the total number of queries made is $\ceil{\frac{n}{\ell-1}} + \ceil{\frac{n}{\ell-2}}(\ceil{\log(\ell+1)}+2)$.
    As was stated in the beginning, $\ell \leq k$.
    On the other hand, $\ell \geq \frac{n}{\ceil{n/k}} \geq \frac{n}{1+n/k} = \frac{1}{1+k/n} k \geq \frac{k}{2}$.
    This gives the following upper bound on the number of queries:
    \[2 \frac{n}{k-2} + 1 + 2 (\frac{n}{k-4}+1)(\log k+3) \leq 2 (\frac{n}{k-4}+1)(\log k+4)\]
    \end{proof}
    
    \section{Conclusion}
    We have presented new upper bounds for computing majority functions $\maj_n(x)$ in two models.
    
    In static model we have constructed the depth-two circuit for $\maj_n$ with $\maj_{\frac{2}{3}n+4}$ gates, which gives the first nontrivial upper bound for this problem.
    
    We have defined an adaptive model with two variations, and presented an algorithm for computing $\maj_n$ function using $2 (\frac{n}{k-4}+1)(\log k+3) \approx 2 \frac{n}{k} \log k$ queries to the oracle in the fixed-threshold setting, and a simple algorithm for $\maj_n$ with $\ceil{n/k}\ceil{\log(k+1)} \approx \frac{n}{k} \log k$ queries in the adjustable-threshold setting.
    
    There is a large gap between current lower and upper bounds for the static model: the best lower bound \cite{DBLP:journals/eccc/EngelsGMR17} is $k \geq \Omega(n^{4/5})$, and our upper bound is $\frac{2}{3}n + 4$.
    In the adaptive model there is also a gap between the bounds, albeit a logarithmic (in $k$) one: our lower bound is $\ceil{n/k}$.
    So the natural questions for future consideration are those of narrowing these gaps.
    
    Another interesting possible direction in the adaptive case is the addition of noise to oracle's answers --- with some probability $\varepsilon$ oracle gives us a uniformly random answer.
    
    \section{Acknowledgements}
    I would like to thank Vladimir Podolskii for posing the problem and for fruitful discussions.

\bibliography{bibliography}

\begin{thebibliography}{10}

\bibitem{DBLP:journals/eccc/EngelsGMR17}
Christian Engels, Mohit Garg, Kazuhisa Makino, and Anup Rao.
\newblock On expressing majority as a majority of majorities.
\newblock {\em Electronic Colloquium on Computational Complexity {(ECCC)}},
  24:174, 2017.
\newblock URL: \url{https://eccc.weizmann.ac.il/report/2017/174}.

\bibitem{DBLP:conf/latin/EppsteinH16}
David Eppstein and Daniel~S. Hirschberg.
\newblock From discrepancy to majority.
\newblock In {\em {LATIN} 2016: Theoretical Informatics - 12th Latin American
  Symposium, Ensenada, Mexico, April 11-15, 2016, Proceedings}, pages 390--402,
  2016.
\newblock URL: \url{http://dx.doi.org/10.1007/978-3-662-49529-2_29}, \href
  {http://dx.doi.org/10.1007/978-3-662-49529-2_29}
  {\path{doi:10.1007/978-3-662-49529-2_29}}.

\bibitem{Gold01}
Oded Goldreich.
\newblock Valiant’s polynomial-size monotone formula for majority, 2001.
\newblock Available at
  \url{http://www.wisdom.weizmann.ac.il/~oded/PDF/mono-maj.pdf}.

\bibitem{DBLP:books/daglib/0028687}
Stasys Jukna.
\newblock {\em Boolean Function Complexity - Advances and Frontiers}, volume~27
  of {\em Algorithms and combinatorics}.
\newblock Springer, 2012.
\newblock URL: \url{http://dx.doi.org/10.1007/978-3-642-24508-4}, \href
  {http://dx.doi.org/10.1007/978-3-642-24508-4}
  {\path{doi:10.1007/978-3-642-24508-4}}.

\bibitem{DBLP:journals/cc/JuknaRSW99}
Stasys Jukna, Alexander~A. Razborov, Petr Savick{\'{y}}, and Ingo Wegener.
\newblock On {P} versus {NP} cap co-{NP} for decision trees and read-once
  branching programs.
\newblock {\em Computational Complexity}, 8(4):357--370, 1999.
\newblock URL: \url{http://dx.doi.org/10.1007/s000370050005}, \href
  {http://dx.doi.org/10.1007/s000370050005} {\path{doi:10.1007/s000370050005}}.

\bibitem{DBLP:journals/siamcomp/KampZ07}
Jesse Kamp and David Zuckerman.
\newblock Deterministic extractors for bit-fixing sources and
  exposure-resilient cryptography.
\newblock {\em {SIAM} J. Comput.}, 36(5):1231--1247, 2007.
\newblock URL: \url{http://dx.doi.org/10.1137/S0097539705446846}, \href
  {http://dx.doi.org/10.1137/S0097539705446846}
  {\path{doi:10.1137/S0097539705446846}}.

\bibitem{kulikov2017computing}
Alexander~S. Kulikov and Vladimir~V. Podolskii.
\newblock Computing majority by constant depth majority circuits with low
  fan-in gates.
\newblock In {\em 34th Symposium on Theoretical Aspects of Computer Science,
  {STACS} 2017, March 8-11, 2017, Hannover, Germany}, pages 49:1--49:14, 2017.
\newblock URL: \url{http://dx.doi.org/10.4230/LIPIcs.STACS.2017.49}, \href
  {http://dx.doi.org/10.4230/LIPIcs.STACS.2017.49}
  {\path{doi:10.4230/LIPIcs.STACS.2017.49}}.

\bibitem{DBLP:journals/rsa/MagniezNSSTX16}
Fr{\'{e}}d{\'{e}}ric Magniez, Ashwin Nayak, Miklos Santha, Jonah Sherman,
  G{\'{a}}bor Tardos, and David Xiao.
\newblock Improved bounds for the randomized decision tree complexity of
  recursive majority.
\newblock {\em Random Struct. Algorithms}, 48(3):612--638, 2016.
\newblock URL: \url{http://dx.doi.org/10.1002/rsa.20598}, \href
  {http://dx.doi.org/10.1002/rsa.20598} {\path{doi:10.1002/rsa.20598}}.

\bibitem{minsky}
Marvin Minsky and Seymour Papert.
\newblock {\em Perceptrons - an introduction to computational geometry}.
\newblock {MIT} Press, 1987.

\bibitem{DBLP:journals/rsa/MosselO03}
Elchanan Mossel and Ryan O'Donnell.
\newblock On the noise sensitivity of monotone functions.
\newblock {\em Random Struct. Algorithms}, 23(3):333--350, 2003.
\newblock URL: \url{http://dx.doi.org/10.1002/rsa.10097}, \href
  {http://dx.doi.org/10.1002/rsa.10097} {\path{doi:10.1002/rsa.10097}}.

\bibitem{o2014analysis}
Ryan O'Donnell.
\newblock {\em Analysis of boolean functions}.
\newblock Cambridge University Press, 2014.

\end{thebibliography}


\end{document}